\newcommand{\Dodd}[1]{\bD_{\textnormal{odd},\,#1}}
\newcommand{\Deven}[1]{\bD_{\textnormal{even},\,#1}}
\newcommand{\tDodd}[1]{\widetilde{\bD}_{\textnormal{odd},\,#1}}
\newcommand{\tDeven}[1]{\widetilde{\bD}_{\textnormal{even},\,#1}}
\renewcommand{\H}{\operatorname{H}\!}
\title{Analyzing XOR-Forrelation through stochastic calculus}
\author{Xinyu Wu\thanks{Computer Science Department, Carnegie Mellon University. \texttt{xinyuwu@cmu.edu}. Supported by NSF grant CCF-1717606. This material is based upon work supported by the National Science Foundation under grant numbers listed above. Any opinions, findings and conclusions or recommendations expressed in this material are those of the author and do not necessarily reflect the views of the National Science Foundation (NSF).}
}
\date{September 6, 2021}
\begin{document}
\maketitle
\begin{abstract}
    In this note we present a simplified analysis of the quantum and classical complexity of the $k$-XOR Forrelation problem (introduced in the paper of Girish, Raz and Zhan~\cite{GRZ20}) by a stochastic interpretation of the Forrelation distribution.
\end{abstract}

\section{Introduction}
The Forrelation problem~\cite{AA15} and variants of it have been useful in
producing problems that are efficiently solvable by quantum protocols but are hard for classical protocols, in various different models.
A recent line of work analyzing the Forrelation distribution builds on the
polarizing random walk framework introduced by Chattopadhyay, Hatami, Hosseini and Lovett~\cite{CHHL19}.
This framework views the Forrelation distribution as being generated by a random walk in $\R^N$, producing a particular Gaussian distribution, and then rounded to the Boolean cube $\{-1,1\}^N$.
This approach lead to breakthroughs as Raz and Tal's result on the oracle separation of BQP and PH~\cite{RT19} and Bansal and Sinha's proof that $k$-Forrelation exhibits an optimal separation between quantum and classical query complexity~\cite{BS20}\footnote{The proof is phrased in terms of Gaussian interpolation, which is a different viewpoint on the stochastic approach.}.

The recent work of Girish, Raz and Zhan~\cite{GRZ20} analyzes the XOR of $k$ copies of the Forrelation function, and shows that the resulting problem is such that
classical protocols of quasipolynomial size can only achieve quasipolynomially small advantage over random guessing, while there exist quantum protocols with complexity $\polylog(N)$.
They show this for quantum simultaneous-message communication protocols vs.\ classical randomized communication protocols, as well as for quantum query complexity vs.\ classical query complexity.

\vspace*{-1ex}
\paragraph*{Stochastic calculus viewpoint.}
The approach here generalizes~\cite{Wu20} (indeed, the $k=1$ case is identical).
There are two main points where the stochastic approach simplifies the argument in~\cite{GRZ20}.

First,
the Forrelation distribution, prior to rounding, is a truncated multivariate Gaussian.
A multivariate $N$-dimensional Gaussian can also be realized as an $N$-dimensional Brownian motion, stopped at some constant time.
Using a continuous-time random walk allows us to apply stochastic calculus techniques to bound how well $f$ distinguishes the two distributions directly using the $2k\th$ order derivatives of $f$, without the need for additional intermediate bounds.
Furthermore, the Brownian motion approach allows for an induction on $k$, eliminating the need for complex dimension-dependent bounds.

Second, viewing the Gaussian as a Brownian motion also allows us to use a stopping time to encode the truncation. This allows us to directly encode the boundedness of the distribution in the random variable.
This eliminates the
extra step to truncate the Gaussian and bound the closeness in expectation between the truncated and non-truncated Gaussians.

\vspace*{-1ex}
\paragraph*{Connections and future work.}
Conceptually, viewing a truncated Gaussian as a stopped Brownian motion enforces a pathwise view of the random variable, i.e.\ sampling from the distribution means sampling a path of a random walk.
This makes calculations on the distributions easier, for instance because the paths naturally split into ``paths which always remain within the region'' and ``paths which end by hitting the boundary''.
This technique may also be interesting for other applications using truncated Gaussians (or analogously, replacing a truncated exponential distribution by a stopped geometric Brownian motion.)
The stochastic calculus view of Gaussians has also been useful for other Boolean analysis results, for instance in the proof of Bobkov's Two Point Inequality by Barthe and Maurey~\cite{BM00}. Ideas related to the pathwise view of random variables also appear in the recent paper of Eldan and Gross~\cite{EG20}, which expresses the variance and influence of a Boolean function in terms of its action on a certain Brownian motion.

\section{Preliminaries}
We state the main stochastic calculus result we will need in the proof.
This is Dynkin's formula~\cite[Theorem 7.4.1]{Oks03} specialized to our scenario of the Brownian motion having mean 0 and constant covariance.
\begin{theorem}\label{thm:dynkin}
    Let $\bX$ be an $n$-dimensional Brownian motion with mean $0$ and covariance $\Sigma$, let $\btau$ be a bounded stopping time, and let $f:\R^N \to \R$ be a twice continuously differentiable function.
    We use $\H f$ to denote the Hessian of $f$, the $N \times N$ matrix of second order partial derivatives.
    The following holds:
    \[
        \E[f(\bX_\btau)] = f(0) + \E\bracks*{\int_0^\btau \frac12\angle{\Sigma,\, \H f(\bX_s)}\, ds}.
    \]
\end{theorem}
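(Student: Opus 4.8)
The plan is to obtain the identity from the multidimensional It\^o formula followed by an optional-stopping argument; it is the special case of the general Dynkin formula in which the generator of $\bX$ has no first-order part, because $\bX$ is driftless. Since a mean-zero Brownian motion is determined by its covariance, I would first write $\bX_t = \Sigma^{1/2}W_t$ with $W$ a standard $N$-dimensional Brownian motion and $\Sigma^{1/2}$ a symmetric square root of $\Sigma$, so that the quadratic covariations are explicit: $d\angle{\bX^i,\bX^j}_s = \Sigma_{ij}\,ds$. Applying It\^o's formula to the $C^2$ function $f$ along $\bX$ then yields, for each fixed $t$,
\[
  f(\bX_t) \;=\; f(0) \;+\; \int_0^t \angle{\nabla f(\bX_s),\,\Sigma^{1/2}\,dW_s} \;+\; \frac12\int_0^t \sum_{i,j}\Sigma_{ij}\bigl(\H f(\bX_s)\bigr)_{ij}\,ds ,
\]
and the last integrand is precisely $\angle{\Sigma,\,\H f(\bX_s)}$ for the entrywise inner product on matrices. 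No first-order term appears, which is exactly why the statement is so clean.

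I would then evaluate this identity at the bounded stopping time $\btau$ (equivalently, run the stopped process $\bX_{t\wedge\btau}$) and take expectations. The second-order term contributes $\E\bracks*{\int_0^\btau \frac12\angle{\Sigma,\,\H f(\bX_s)}\,ds}$ directly, so everything reduces to showing that the stochastic-integral term has expectation zero. This is the one point that needs care: an It\^o integral against Brownian motion is a priori only a local martingale, so one cannot simply say ``it is a martingale, apply optional stopping.'' The standard fix is localization --- put $\btau_n = \btau\wedge\inf\{t : |\bX_t|\ge n\}$; on $[0,\btau_n]$ the continuous function $\nabla f$ is bounded, hence $t\mapsto \int_0^{t\wedge\btau_n}\angle{\nabla f(\bX_s),\,\Sigma^{1/2}\,dW_s}$ is a genuine $L^2$ martingale started at $0$, and therefore $\E[f(\bX_{\btau_n})] = f(0) + \frac12\E\bracks*{\int_0^{\btau_n}\angle{\Sigma,\,\H f(\bX_s)}\,ds}$.

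It remains to let $n\to\infty$. Because $\btau$ is bounded, say $\btau\le T$ almost surely, and Brownian paths are continuous and so bounded on $[0,T]$, we have $\btau_n\uparrow\btau$ and $\bX_{\btau_n}\to\bX_\btau$ almost surely; passing to the limit then requires only monotone/dominated convergence. On the right this is immediate from continuity of $\H f$ and the fact that $\bX$ restricted to the bounded time horizon $[0,T]$ has all moments, so $\int_0^\btau |\angle{\Sigma,\,\H f(\bX_s)}|\,ds$ is integrable; on the left, $f(\bX_{\btau_n})\to f(\bX_\btau)$ with the whole family dominated by $\sup_{t\le T}|f(\bX_t)|$, which is integrable given the at-most-polynomial growth of $f$ present in every application here. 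I expect this limiting step --- producing the integrability that upgrades the local martingale to an honest one and legitimizes the two passages to the limit --- to be the only real obstacle; the It\^o computation and the identification of the integrand with $\angle{\Sigma,\,\H f}$ are routine.
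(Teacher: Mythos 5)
The paper does not actually prove this statement: it is presented as a direct specialization of Dynkin's formula with a citation to \O ksendal, Theorem~7.4.1, which already covers the case of a driftless diffusion with constant diffusion matrix. Your argument---It\^o's formula along $\bX_t=\Sigma^{1/2}W_t$ producing the second-order term $\tfrac12\angle{\Sigma,\H f(\bX_s)}$ with no first-order part, then localization to show the stochastic integral is a true martingale up to $\btau_n$, then a limiting argument to remove the localization---is exactly the standard textbook proof of Dynkin's formula and is essentially \O ksendal's own argument. So your proof is correct and consistent with the source the paper invokes; it just spells out what the paper outsources.

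One small caution worth recording: the theorem as transcribed in the paper assumes only $f\in C^2$, but the proof you give (and \O ksendal's Theorem~7.4.1) needs a hypothesis strong enough to justify integrability of $\H f(\bX_s)$ and a domination for $f(\bX_{\btau_n})$; \O ksendal assumes $f\in C^2_0$, and you substitute ``at-most-polynomial growth,'' which also works. For an arbitrary $C^2$ function with, say, super-Gaussian growth of $\H f$, the expectations need not be finite even over a bounded time horizon, so the statement as literally written is a hair too permissive. This is harmless in the paper because the only stopping times used force $\bX$ to stay in $[-1/2,1/2]^N$ up to $\btau$, so $f$, $\nabla f$, and $\H f$ are all evaluated only on a compact set and every integrability issue disappears; but since you explicitly flag the limiting step as the crux, it would tighten your write-up to note that one should invoke this boundedness (or impose $f\in C^2_0$) rather than appeal to a growth condition not stated in the theorem.
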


We also need the following formula regarding random restrictions, which is essentially Lemma~1 of~\cite{Wu20}. A similar idea appears in the proof of Lemma 5.1 of~\cite{GRZ20}, and previously in~\cite[Claim A.5]{CHLT18}.
\begin{lemma}\label{lem:derivs}
Let $f:\R^N \to \R$ be a multilinear polynomial. For any $x \in [-1/2,1/2]^N$, there exists a distribution $\calR_x$ over restrictions $\brho \in \{-1,1,*\}^N$, such that for any $S \subseteq [N]$,
\[
     \pt_{S}f(x) = 2^{|S|}\E_{\brho \sim \calR_x}\bracks*{\pt_{S}f_{\brho}(0)}.
\]
Here we write $\pt_{S} = \prod_{i\in S}\frac{\pt}{\pt_i}$ for the partial derivatives over the coordinates in $S$. We further define the \emph{Fourier coefficient} $\wh f(S) \coloneqq \pt_S f(0)$. Note that this coincides with the usual decomposition $f(x) = \sum_{S \subseteq [n]} \wh{f}(S) \prod_{i \in S} x_i$.
\end{lemma}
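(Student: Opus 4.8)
The plan is to exhibit $\calR_x$ explicitly as a product distribution over the coordinates and then verify the identity by expanding both sides in the multilinear Fourier basis. For $\brho \sim \calR_x$, I would sample the $\brho_i$ independently so that $\brho_i = *$ with probability $\tfrac12$, and conditioned on $\brho_i \ne *$ the value is $\pm 1$ with conditional mean $2x_i$; equivalently $\Pr[\brho_i = 1] = \tfrac14(1+2x_i)$, $\Pr[\brho_i = -1] = \tfrac14(1-2x_i)$, and $\Pr[\brho_i = *] = \tfrac12$. The hypothesis $x \in [-1/2,1/2]^N$ is exactly what guarantees $2x_i \in [-1,1]$, so these are legitimate probabilities; this is the only place the range restriction on $x$ enters.

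Next I would write $f(y) = \sum_{T \subseteq [N]} \wh f(T) \prod_{i \in T} y_i$ and let $A = A(\brho) = \{i : \brho_i = *\}$ be the (random) set of free coordinates. Restricting by $\brho$ fixes each coordinate $i \notin A$ to $\brho_i$ and leaves the coordinates of $A$ free, so the Fourier coefficient $\pt_S f_\brho(0) = \wh{f_\brho}(S)$ vanishes unless $S \subseteq A$, in which case it equals $\sum_{T : T \cap A = S} \wh f(T) \prod_{i \in T \setminus S} \brho_i$. Taking expectations and exchanging the finite sum with the expectation, the contribution of a fixed $T \supseteq S$ is $\wh f(T)$ times $\E\bracks*{\mathbf{1}[S \subseteq A]\,\mathbf{1}[(T \setminus S) \cap A = \emptyset]\prod_{i \in T \setminus S} \brho_i}$, which by independence across coordinates factors as $\prod_{i \in S}\Pr[\brho_i = *]\cdot\prod_{i \in T\setminus S}\E\bracks*{\mathbf{1}[\brho_i \ne *]\,\brho_i} = \tfrac{1}{2^{|S|}}\prod_{i \in T\setminus S}x_i$, using that the conditional-mean choice above makes $\E\bracks*{\mathbf{1}[\brho_i\ne *]\,\brho_i} = x_i$.

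Summing over all $T \supseteq S$ then gives $\E_{\brho}[\pt_S f_\brho(0)] = \tfrac{1}{2^{|S|}}\sum_{T \supseteq S}\wh f(T)\prod_{i\in T\setminus S}x_i$, and since $f$ is multilinear the right-hand sum is exactly $\pt_S f(x)$ (differentiate each monomial $\prod_{i\in T}y_i$ by $\pt_S$: it survives iff $S\subseteq T$, leaving $\prod_{i\in T\setminus S}y_i$). Rearranging yields the claimed identity. The last two sentences of the statement are then immediate: differentiating $f(x) = \sum_T \wh f(T)\prod_{i\in T}x_i$ by $\pt_S$ and evaluating at $0$ kills every monomial except $T = S$, so $\pt_S f(0) = \wh f(S)$ agrees with the usual Fourier coefficient.

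I do not expect a genuine obstacle; once $\calR_x$ is written down the computation is a few lines. The two points needing care are (i) verifying $\calR_x$ is an honest probability distribution, which is exactly where $|x_i|\le\tfrac12$ is used, and (ii) the bookkeeping of which monomials $T$ survive a given restriction — one must impose both $S\subseteq A$ and $(T\setminus S)\cap A = \emptyset$ (equivalently $T\cap A = S$), not merely one of them, or the count of surviving terms comes out wrong.
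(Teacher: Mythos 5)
Your proposal is correct and matches the standard construction the paper cites (Lemma~1 of~\cite{Wu20}; cf.\ \cite[Claim~A.5]{CHLT18}): each coordinate is independently free with probability $\tfrac12$ and otherwise fixed to $\pm 1$ with conditional mean $2x_i$, and the Fourier-expansion computation you carry out is exactly how the identity is verified there, with the $2^{|S|}$ factor arising from the $\tfrac12$ probability of each coordinate being free.
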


\section{A bound on a product of Brownian motions}

\begin{definition}\label{def:setup}
Let $k \in \N_+$, and let $\bX^{(1)},\dots, \bX^{(k)}$ be identical independent $N$-dimensional Brownian motions with mean 0 and covariance matrix $\Sigma$, and let $\btau_1, \dots, \btau_k$ be stopping times.

We consider distributions on $\R^{kN} \cong (\R^N)^k$, which we take to be $k$ copies of $\R^N$, indexed by coordinates $1,\dots,k$.
Let $S \subseteq [k]$.
Define the random variable $\bX^S_\btau$ to be
$\bX^{(i)}_{\btau_i}$ in the $i\th$ coordinate if $i \in S$, and $0$ in the $i\th$ coordinate if $i \notin S$.
We set $\bD_S$ to be the distribution of $\bX^S_\btau$.

We write $\bS \sim [k]$ to denote drawing $\bS \subseteq [k]$ uniformly.
We now define the distribution $\Dodd{k}$ to be the distribution of $\bD_\bS$ conditioned on $|\bS|$ being odd.
Similarly, we define $\Deven{k}$ to be $\bD_\bS$ conditioned on $|\bS|$ being even.
When $k=1$, we define $\bD_1 = \bX^{(1)}_{\btau_1} = \Dodd{1}$.

For a multilinear function $f:\R^{kN} \to \R$, we note the identity
\begin{equation}\label{eqn:difference}
    \E[f(\Deven{k})] - \E[f(\Dodd{k})] = 2 \E_{\bS \sim [k]}\bracks*{(-1)^{|\bS|} f( \bD_\bS)}.
\end{equation}
\end{definition}

The following bounds how well a Boolean function with bounded level-$2k$ Fourier weight can distinguish $\Deven{k}$ and $\Dodd{k}$.
This is essentially Theorem 3.1 of~\cite{GRZ20}.
\begin{theorem}\label{thm:main}
Let $k \in \N_+$, let $f:\{-1,1\}^{kN} \to \{-1,1\}$ be a Boolean function, and let $L > 0$ be such that for any restriction~$\rho$,
\[
    \sum_{\substack{S \se [kN] \\ |S| = 2k}} |\wh{f_\rho}(S)| \leq L.
\]
Let $\gamma > 0$ and let $\bX^{(1)},\dots, \bX^{(k)}$ be identical independent $N$-dimensional Brownian motions with mean 0 and covariance matrix $\Sigma$.
Further assume that $|\Sigma_{ij}| \leq \gamma$ for $i \ne j$.

Let $\ep > 0$ and define the (bounded) stopping times for each $i \in [k]$,
\[
    \btau_i \coloneqq \min\,\{\ep, \text{ first time that $\bX^{(i)}$ exits } [-1/2,1/2]^N\}.
\]

Then, identifying $f$ with its multilinear expansion, we have
\begin{equation*}
    \abs{\E_{\bS \sim [k]}\bracks*{(-1)^{|\bS|} f( \bD_\bS)}}\leq (\ep \gamma)^k L.
\end{equation*}
\end{theorem}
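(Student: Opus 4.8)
I would prove the inequality by induction on $k$, peeling off one Brownian motion at each step and applying Dynkin's formula (Theorem~\ref{thm:dynkin}) to turn a difference of expectations into a time integral of a Hessian contraction. It is cleanest to phrase the inductive claim for an arbitrary multilinear polynomial obeying the level-$2k$ weight hypothesis, since the assumption that $f$ is $\{-1,1\}$-valued is never used and the inductive hypothesis has to be applied to (non-Boolean) auxiliary functions. Throughout I write $f(y,z)$ for $f$ evaluated with the first block of coordinates equal to $y\in\R^N$ and the remaining $k-1$ blocks equal to $z\in\R^{(k-1)N}$.

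\emph{Base case $k=1$.} Since $\bS$ is uniform over $\{\emptyset,\{1\}\}$ and $\bD_\emptyset$ is the point mass at the origin, the quantity to bound equals $\frac12\bigl(f(0)-\E[f(\bX^{(1)}_{\btau_1})]\bigr)$, which Dynkin's formula rewrites as $-\frac14\,\E\!\left[\int_0^{\btau_1}\angle{\Sigma,\,\H f(\bX^{(1)}_s)}\,ds\right]$. Multilinearity kills the diagonal of $\H f$, so the integrand is $\sum_{i\neq j}\Sigma_{ij}\,\pt_{\{i,j\}}f(\bX^{(1)}_s)$ with $|\Sigma_{ij}|\le\gamma$; and because $\bX^{(1)}_s\in[-1/2,1/2]^N$ for all $s<\btau_1$, Lemma~\ref{lem:derivs} replaces each $\pt_{\{i,j\}}f(\bX^{(1)}_s)$ by $2^2\,\E_{\brho}[\wh{f_\brho}(\{i,j\})]$ for a suitable distribution over restrictions $\brho$. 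The sum of $|\wh{f_\brho}(\{i,j\})|$ over $i\neq j$ is at most $2L$ for every restriction $\brho$ by the level-$2$ hypothesis, and using $\btau_1\le\ep$ then bounds the whole expression by $O(\ep\gamma L)$.

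\emph{Inductive step.} Conditioning on whether $1\in\bS$, and writing $\bS=\bS''$ or $\bS=\bS''\cup\{1\}$ for $\bS''\subseteq\{2,\dots,k\}$ uniform (which after relabeling is a $(k-1)$-walk instance on $\R^{(k-1)N}$), the quantity to bound becomes $\frac12\,\E_{\bS''}\!\left[(-1)^{|\bS''|}\bigl(f(\bD_{\bS''})-f(\bD_{\bS''\cup\{1\}})\bigr)\right]$. After conditioning also on $\bX^{(2)},\dots,\bX^{(k)}$, the inner difference is $h(0)-h(\bX^{(1)}_{\btau_1})$ where $h$ is $f$ with the final positions of those walks substituted into the last $k-1$ blocks, and Dynkin applied to $\bX^{(1)}$ turns this into $-\frac12\,\E\!\left[\int_0^{\btau_1}\angle{\Sigma,\,\H h(\bX^{(1)}_s)}\,ds\right]$, with $\H h$ the first-block block of $\H f$. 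Interchanging the order of expectation (valid since $\bX^{(1)}$ is independent of the other walks) reduces the problem to bounding, for each fixed $s<\btau_1$ and each fixed value $x=\bX^{(1)}_s\in[-1/2,1/2]^N$, the quantity $\E_{\bS''}\!\left[(-1)^{|\bS''|}\,\Phi_x(\bD_{\bS''})\right]$, where $\Phi_x(z)=\sum_{i\neq j}\Sigma_{ij}\,\pt_{y_i}\pt_{y_j}f(x,z)$ (its diagonal vanishing by multilinearity) is a multilinear function on the last $k-1$ blocks. The key observation is that this is \emph{precisely} the quantity bounded by the theorem at level $k-1$, applied to $\Phi_x$. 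To make it usable I would apply Lemma~\ref{lem:derivs} to the first block of coordinates only --- legitimate since $\calR_x$ may be taken to be a product distribution over the coordinates --- turning the evaluation at $x$ into an average over restrictions $\brho$ of the first block, so that $\Phi_x$ becomes an average of the functions $\Psi_\brho(z)=\sum_{i\neq j}\Sigma_{ij}\,\pt_{y_i}\pt_{y_j}f_\brho(0,z)$. Since $\pt_{y_i}\pt_{y_j}$ sends a level-$2(k-1)$ Fourier coefficient of any restriction of $\Psi_\brho$ to a level-$2k$ Fourier coefficient of the corresponding restriction of $f$, the level-$2k$ hypothesis on $f$ --- together with $|\Sigma_{ij}|\le\gamma$ and the fact that each unordered pair is counted twice in the sum over $i\neq j$ --- bounds the level-$2(k-1)$ weight of $\Psi_\brho$ under any restriction by $2\gamma L$. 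The inductive hypothesis then gives $\bigl|\E_{\bS''}[(-1)^{|\bS''|}\Psi_\brho(\bD_{\bS''})]\bigr|\le 2\gamma L\,(\ep\gamma)^{k-1}$; averaging over $\brho$, integrating over $s\in[0,\btau_1]$, and using $\btau_1\le\ep$ supplies the remaining factor of $\ep\gamma$, yielding a bound of the form $(c\ep\gamma)^k L$ for an absolute constant $c$ --- the stated bound when $c=1$, and otherwise the same shape once the various factors of two (from Dynkin's $\frac12$, the parity splitting, Lemma~\ref{lem:derivs}, and the symmetry of the Hessian) are pinned down.

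The main difficulty will be arranging the inductive step so that no power of $N$ is lost. The tempting move --- bounding $\angle{\Sigma,\H h}$ term by term and applying the inductive hypothesis separately to each second derivative $\pt_{y_i}\pt_{y_j}f$ --- produces a sum of $\Theta(N^2)$ contributions and destroys the dimension-independence; one must instead keep $\Phi_x$ as a single multilinear function and feed it wholesale to the inductive hypothesis. The price is the bookkeeping above, and in particular tracking how the restriction that Lemma~\ref{lem:derivs} produces on the peeled-off block composes with the restrictions quantified over in the level-$2(k-1)$ hypothesis, so that the weight bound $L$ is \emph{preserved} through the recursion rather than amplified at each of the $k$ levels.
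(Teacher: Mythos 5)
Your proposal is correct in outline and uses the same three ingredients as the paper (induction on $k$, Dynkin's formula to peel off a single Brownian block, and Lemma~\ref{lem:derivs} to convert $2k$th derivatives at interior points into Fourier coefficients of restrictions), but it arranges them differently. The paper proves an \emph{exact identity} by induction --- equation~\eqref{eqn:induction}, which writes $\E_{\bS}[(-1)^{|\bS|}f(\bD_\bS)]$ as an iterated stochastic integral against the $2k$th-order Hessian tensor --- and then does \emph{all} the bounding once, at the end, over the full tensor at once. You instead bound at every inductive level, which forces you to (i) strengthen the inductive statement to general real-valued multilinear polynomials, since $\Psi_\brho$ is not Boolean, and (ii) track how the restriction-weight parameter transforms under the map $f \mapsto \Psi_\brho$. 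Both of these you flag and handle correctly: $\calR_x$ from Lemma~\ref{lem:derivs} is a product distribution so it can be applied block-by-block, each pair $(\{i,j\},T)$ with $T$ in the remaining blocks corresponds bijectively to a single level-$2k$ set of $f_{\brho\sigma}$, and you rightly observe that applying the inductive hypothesis individually to each $\partial_{y_i}\partial_{y_j}f$ would incur an $N^2$ loss, so $\Phi_x$ must be kept whole. The price of your arrangement is having to thread the weight bound $L$ through $k$ levels; the paper's single end-of-proof bound avoids that bookkeeping at the cost of the slightly awkward $(I_k\otimes\Sigma)^{\otimes k}$ / $\H^{\otimes k}$ tensor notation. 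On your worry about the constant $c$: yes, a careful accounting of your steps (the two $\tfrac12$'s, the $2^{2}$ from Lemma~\ref{lem:derivs}, and the factor $2$ from $\sum_{i\neq j}$ versus $\sum_{i<j}$) gives roughly $c=2$, i.e.\ a bound $\approx 2^k(\ep\gamma)^kL$ rather than $(\ep\gamma)^kL$; but the ordered/unordered double-count in $\angle{\Sigma,\H f}$ appears to create a comparable wrinkle already in the paper's own final bounding chain, and in any case a factor $2^k$ is immaterial for every downstream application.
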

\begin{proof}
We first prove by induction on $k$ that for any multilinear function $f$,
\begin{equation} \label{eqn:induction}
    \E_{\bS \sim [k]}\bracks*{(-1)^{|\bS|}f(\bD_\bS)} = \E \bracks*{\int_0^{\btau_1}\cdots\int_0^{\btau_k} \frac{(-1)^{k-1}}{2^{2k-1}} \angle*{(I_k\otimes \Sigma)^{\otimes k},\; \H^{\otimes k} f(\bX_{t_1}^{(1)}, \dots, \bX_{t_k}^{(k)})}\, dt_1\dots d t_k},
\end{equation}
where $\H^{\otimes k} f$ denotes the $(kN)^k$-dimensional matrix of all the $2k\th$ order derivatives of $f$, $I_k$ is the $k$-dimensional identity matrix, and $\otimes$ denotes the Kronecker product.

The base case $k = 1$ is simply a direct application of Dynkin's formula (\Cref{thm:dynkin}):
\[
    \E[f(\bX_{\btau_1}^{(1)})] - f(0) = \E\bracks*{\int_0^{\btau_1} \frac12 \angle*{\Sigma,\,\H f(\bX_{t_1}^{(1)})}\, dt_1}.
\]

For the induction step, we condition on the last coordinate to observe that
\begin{equation}\label{eqn:coord-k}
    \E_{\bS \sim [k]}\bracks*{(-1)^{|\bS|}f(\bD_\bS)} = \frac12 \E_{\bS \sim [k-1]} \bracks*{(-1)^{|\bS|}f(\bD_\bS, 0)} - \frac12\E_{\bS \sim [k-1]} \bracks*{(-1)^{|\bS|}f(\bD_\bS, \bX_{\btau_k}^{(k)})}
\end{equation}
Now let $\bS \sim [k-1]$.
We will proceed by applying Dynkin's formula to $g(x) = \E_{\bS,\,\bD_\bS}[(-1)^{|\bS|}f(\bD_\bS, x)]$.
Since $f$ is a multilinear function, partial derivatives of $g$ commute with the expectation; in particular $\partial_i g(x) = \E[(-1)^{|\bS|}\partial_{i+(k-1)N} f(\bD_\bS,x)]$, and so, with $e_k \in \R^k$ denoting the indicator of the $k\th$ coordinate,
\begin{align}
    \E_{\bS,\,\bD_\bS,\,\bX^{(k)},\,\btau_k}[& (-1)^{|\bS|}f(\bD_\bS,\bX_{\btau_k}^{(k)})] - \E_{\bS,\,\bD_\bS}[(-1)^{|\bS|} f(\bD_\bS,0)] \notag\\
    &= \E_{\bX^{(k)},\,\btau_k}\bracks*{\frac12\int_0^{\btau_k} \angle*{e_k e_k^T \otimes \Sigma, \E_{\bS,\,\bD_\bS}\bracks*{(-1)^{|\bS|}\H f(\bD_\bS, \bX_{t_k}^{(k)})}}\, dt_k} \label{eqn:induct}
\end{align}
Finally, we apply the induction hypothesis to find, for $(k-1)N < i, j \leq kN$,
\begin{align*}
\hspace*{-1em}
    \E_{\bS \sim [k-1]}&\bracks*{(-1)^{|\bS|}\partial_{i,j} f(\bD_\bS, \bX_{t_k}^{(k)})} \\
    &= \E \bracks*{\int_0^{\btau_1}\dots\int_0^{\btau_{k-1}} \frac{(-1)^{k-2}}{2^{2k-3}} \angle*{(I_{k-1} \otimes \Sigma)^{\otimes (k-1)},\; \H^{\otimes (k-1)}_{\bX^{(1)},\dots,\bX^{(k-1)}} \partial_{i,j}f(\bX_{t_1}^{(1)}, \dots, \bX_{t_{k}}^{(k)})}\, dt_1\dots d t_{k-1}}.
\end{align*}
Combining with \Cref{eqn:coord-k,eqn:induct} and using bilinearity of the inner product, we conclude
\begin{align*}
\hspace*{-3em}
    \E_{\bS \sim [k]}&\bracks*{(-1)^{|\bS|}f(\bD_\bS)} \\
    &= -\frac12\E_{\bX^{(k)},\,\btau_k}\bracks*{\frac12\int_0^{\btau_k} \angle*{e_k e_k^T \otimes \Sigma, \E_{\bS,\,\bD_\bS}\bracks*{(-1)^{|\bS|}\H f(\bD_\bS, \bX_{t_k}^{(k)})}}\, dt_k}\\
    &= \E\bracks*{\int_0^{\btau_1}\dots\int_0^{\btau_{k}} \frac{(-1)^{k-1}}{2^{2k-1}} \angle*{e_ke_k^T \otimes \Sigma, \H_{\bX^{(k)}} \angle*{(I_{k-1} \otimes \Sigma)^{\otimes (k-1)},\,\H^{\otimes (k-1)}_{\bX^{(1)},\dots,\bX^{(k-1)}} f(\bX_{t_1}^{(1)}, \dots, \bX_{t_{k}}^{(k)})}}\, dt_1\dots d t_{k}}\\
    &= \E \bracks*{\int_0^{\btau_1}\dots\int_0^{\btau_k} \frac{(-1)^{k-1}}{2^{2k-1}} \angle*{(I_k\otimes \Sigma)^{\otimes k},\; \H^{\otimes k} f(\bX_{t_1}^{(1)}, \dots, \bX_{t_k}^{(k)})}\, dt_1\dots d t_k}.
\end{align*}

Having completed the proof of~\Cref{eqn:induction}, we now use it to prove the theorem
\begin{align*}
    \E_{\bS \sim [k]}&\bracks*{(-1)^{|\bS|}f(\bD_\bS)} \\
    &\leq \eps^k \E\bracks*{\sup_{\substack{t_1 \in [0, \btau_1] \\ \dots \\ t_k \in [0,\btau_k]}} \abs{\frac{1}{2^{2k}} \angle*{(I_k\otimes \Sigma)^{\otimes k},\; \H^{\otimes k} f(\bX_{t_1}^{(1)}, \dots, \bX_{t_k}^{(k)})} }}
    &&(\btau_1,\dots,\btau_k \leq \eps)\\
    &\leq \frac{(\eps\gamma)^k}{2^{2k}} \sup_{(x_1,\dots,x_k) \in [-1/2, 1/2]^{kN} } \sum_{\substack{S \subseteq [kN] \\ |S| = 2k}} \abs{\pt_S f(x_1,\dots,x_k)}
    &&(|\Sigma_{ij}| \leq \gamma \text{ for } i \neq j,\; \pt_{ii}f = 0)\\
    &\leq (\eps\gamma)^k \sup_{(x_1,\dots,x_k) \in [-1/2, 1/2]^{kN} } \sum_{\substack{S \subseteq [kN] \\ |S| = 2k}} \abs{\E_{\brho \sim \calR_{x_1,\dots,x_k}}[\pt_S f_\rho(0,\dots,0)]}
    &&(\text{\Cref{lem:derivs}})\\
    &\leq (\eps\gamma)^k \sup_{(x_1,\dots,x_k) \in [-1/2, 1/2]^{kN} } \E_{\brho \sim \calR_{x_1,\dots,x_k}}\bracks*{\sum_{\substack{S \subseteq [kN] \\ |S| = 2k}} \abs{ \wh f_\rho(S)}}&&\\
    &\leq (\eps\gamma)^kL.&&\qedhere
\end{align*}
\end{proof}

\section{Application to complexity of \texorpdfstring{$k$}{k}-XOR Forrelation}
We now apply the bound from the previous section to prove the main theorem from~\cite[Theorem 3.1]{GRZ20}, from which they derive separations in quantum versus classical query complexity, communication complexity and circuit complexity (we refer to~\cite{GRZ20} for the exact details about the definitions of the complexity classes and the full proof).

We briefly sketch how the proof in~\cite{GRZ20} proceeds.
For the lower bounds on the classical complexity classes, it suffices to exhibit two distributions that are hard for functions in the complexity class to distinguish. These will be derived from $\Dodd{k}$ and $\Deven{k}$, with $\Sigma$ and $\eps$ chosen appropriately (\cite{GRZ20} uses the truncated Gaussian instead of the stopping time here).
\cite{GRZ20} observes that these classical complexity classes are closed under restrictions.
Then,~\Cref{thm:main,eqn:difference} combined with bounds on the level-$k$ Fourier weights proven in~\cite{Tal19} and~\cite{GRZ20} shows the classical lower bounds.

For the quantum upper bound, we need to show that a quantum query algorithm (or communication protocol respectively) can distinguish $\Dodd{k}$ and $\Deven{k}$ with high probability.
We will show that the concentration results proven in \cite{GRZ20} hold in our context as well.

We now set values for $\eps$ and $k$. We take $\eps = 1/(28 k^2 \ln N)$, and $k$ small enough that $\eps^2N \leq \poly(N)$ (e.g. $k \leq O(N^{1/5})$ suffices), and set
\[
    \Sigma \coloneqq \begin{pmatrix}
    I_n & H_n\\ H_n & I_n
    \end{pmatrix},
\]
where $N = 2n$, $n$ is a power of $2$ and $H_n$ is the normalized Hadamard matrix, so $\gamma = \frac{1}{\sqrt{n}}$.
Applying~\Cref{thm:main}, the overall upper bound is $L_{2k}\cdot\text{polylog}(N)/N^{k}$, where $L_{2k}$ is the bound on the Fourier weight at level $2k$ for the family of functions in the complexity class.

The quantum algorithm/communication protocol is based on the $k$-XOR Forrelation problem, which we define here:
Let $\phi:\R^n \times \R^n \to \R$ as $\phi(x,y) \coloneqq \frac 1n \angle{x,H_ny}$. The Forrelation decision problem is a partial function defined by
\begin{align*}
    F(x,y) = \begin{cases}
    -1 &\text{ if } \phi(x,y) \geq \eps/2,\\
     1 &\text{ if } \phi(x,y) \leq \eps/4.
    \end{cases}
\end{align*}
The $k$-XOR Forrelation $F^{(k)}:\{-1,1\}^{kN} \to \{-1,1\}$ is defined by $F^{(k)}(z_1,\dots,z_k) = \prod_{i=1}^k F(z_i)$.

Since $\Dodd{k}$ and $\Deven{k}$ take values in $[-1/2,1/2]^{kN}$ but $F^{(k)}$ is defined on $\{-1, 1\}^{kN}$, we round them to distributions $\tDodd{k}$ and $\tDeven{k}$ on $\{-1,1\}^{kN}$.
A draw of $\td\bz \sim \tDodd{k}$ is defined as follows:
\begin{enumerate}[nosep]
    \item Sample $\bz \sim \Dodd{k}$.
    \item For each coordinate $i \in [N]$, independently set $\widetilde{\bz}_i = 1$ with probability $\frac{1 + \bz_i}{2}$ and $-1$ with probability $\frac{1-\bz_i}{2}$. We denote $\td\bz \sim \bz$ for this step. Now $\td\bz$ is sampled from $\tDodd{k}$.
\end{enumerate}
$\tDeven{k}$ is defined analogously.
Note that for a multilinear polynomial $f:\R^N \to \R$, $\E[f(\Dodd{k})] = \E[f(\tDodd{k})]$ (analogously for $\tDeven{k}$).

Girish, Raz and Zhan showed the following about the rounding process:
\begin{proposition}[Claim~A.2~\cite{GRZ20}]\label{prop:rounding}
    Let $z \in [-1/2,1/2]$, and let $\td\bz \sim z$ as in step 2 above. Then,
    \[
        \P[|\phi(\td\bz) - \phi(z)| > \eps/4] \leq \exp(-\Omega (N^{1/4})).
    \]
\end{proposition}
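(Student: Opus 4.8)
The plan is to write the rounded point as $\td\bz = z + \bw$, where the increments $\bw_i = \td\bz_i - z_i$ are independent and mean zero by construction, and each is supported on an interval of length $2$ (namely on the two points $1 - z_i$ and $-1 - z_i$), so each $\bw_i$ is $1$-sub-Gaussian and satisfies $|\bw_i| \le 1 + |z_i| \le 3/2$ surely. Writing $z = (x, y)$ and $\bw = (\bu, \bv)$ for the splittings into the two length-$n$ blocks, and using bilinearity of $\phi(x,y) = \frac1n\angle{x, H_n y}$, I obtain
\[
    \phi(\td\bz) - \phi(z) = \frac1n\angle{\bu, H_n y} + \frac1n\angle{x, H_n \bv} + \frac1n\angle{\bu, H_n \bv}.
\]
The first two summands are linear, and the third is bilinear, in the independent sub-Gaussian coordinates of $\bu$ and $\bv$. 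By the triangle inequality and a union bound it then suffices to show that each of the three summands exceeds $\eps/12$ only with probability $\exp(-\Omega(\eps^2 N))$.

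For the two linear summands I would invoke the standard sub-Gaussian (Hoeffding-type) tail bound. For instance $\frac1n\angle{\bu, H_n y} = \sum_i \bu_i \cdot \frac1n (H_n y)_i$ is a linear combination of the independent $1$-sub-Gaussians $\bu_i$ with coefficient vector $\frac1n H_n y$, whose squared $\ell_2$-norm is $\frac1{n^2}\|H_n y\|_2^2 = \frac1{n^2}\|y\|_2^2 \le \frac1{4n}$ since $H_n$ is orthogonal and $|y_i| \le 1/2$. Hence this term exceeds $\eps/12$ with probability at most $2\exp(-\Omega(n\eps^2))$, and the same holds for $\frac1n\angle{x, H_n \bv}$ by symmetry.

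The bilinear summand $\frac1n\angle{\bu, H_n \bv}$ is the only piece to which a plain Hoeffding bound does not directly apply, and it is the crux of the argument. The simplest route is to condition on $\bu$: the summand is then the linear form $\sum_j \bv_j \cdot \frac1n (H_n \bu)_j$ in the independent $1$-sub-Gaussians $\bv_j$, with the random coefficient vector $\frac1n H_n \bu$ of squared $\ell_2$-norm $\frac1{n^2}\|\bu\|_2^2 \le \frac9{4n}$, the bound again following from orthogonality of $H_n$ and the sure bound $|\bu_i| \le 3/2$. Since this estimate is uniform over all realizations of $\bu$, the conditional --- and hence the unconditional --- tail is $2\exp(-\Omega(n\eps^2))$, with no separate concentration estimate for $\|\bu\|_2$ required. (Alternatively one could apply a Hanson--Wright inequality for decoupled bilinear forms, using $\|H_n\|_{\mathrm{op}} = 1$ and $\|H_n\|_F^2 = n$, or a general degree-$2$ polynomial concentration bound to $\phi(\td\bz) - \phi(z)$ directly; each yields the same exponent, but the conditioning argument is the most elementary.) Combining the three estimates via the union bound gives $\P[|\phi(\td\bz) - \phi(z)| > \eps/4] \le \exp(-\Omega(\eps^2 N))$, and substituting $\eps = 1/(28 k^2 \ln N)$ together with the standing bound on $k$ makes the exponent polynomially large in $N$, yielding a bound of the stated form $\exp(-\Omega(N^{1/4}))$. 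The one genuinely delicate point is the bilinear term: the two linear terms are routine, but that term forces either the conditioning trick above --- which works precisely because $\|\bu\|_2$ is bounded \emph{surely} rather than merely with high probability --- or a genuine quadratic concentration inequality.
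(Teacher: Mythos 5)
This paper does not reprove the proposition: it cites it verbatim as Claim~A.2 of~\cite{GRZ20} and simply uses it as a black box, so there is no ``paper's own proof'' against which to compare. Your reconstruction is, however, correct and is essentially the standard argument (and, to my recollection, close to what appears in the appendix of~\cite{GRZ20}): decompose $\phi(\td\bz)-\phi(z)$ into two linear pieces and one bilinear piece in the centered, bounded rounding noise, apply Hoeffding/sub-Gaussian tail bounds to the linear pieces, and handle the bilinear piece by conditioning on one block, which works precisely because $\|\bu\|_2^2 \leq (3/2)^2 n$ holds surely, so no separate concentration of $\|\bu\|_2$ is needed. All the norm calculations ($\|H_n y\|_2=\|y\|_2$, $\|\tfrac1n H_n y\|_2^2 \le \tfrac1{4n}$, $\|\tfrac1n H_n\bu\|_2^2 \le \tfrac9{4n}$) and the sub-Gaussianity of each $\bw_i$ via Hoeffding's lemma are correct.

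The one place to be careful is the very last step. Your argument cleanly yields $\P[|\phi(\td\bz)-\phi(z)|>\eps/4]\le \exp(-\Omega(N\eps^2))$, but the stated exponent $\Omega(N^{1/4})$ only follows when $N\eps^2 \gtrsim N^{1/4}$. With $\eps = 1/(28k^2\ln N)$ this requires roughly $k^2\ln N \lesssim N^{3/8}$, which is not implied by the looser constraint ``$k\le O(N^{1/5})$'' mentioned in the present paper (for $k\approx N^{1/5}$ one only gets $N\eps^2 \approx N^{1/5}/\ln^2 N$). In the regime used in~\cite{GRZ20} ($k$ polylogarithmic, $\eps = 1/\mathrm{polylog}(N)$) the conversion is immediate, so your proof is fine there; you should simply state the bound as $\exp(-\Omega(N\eps^2))$ and note explicitly the range of $\eps$ (equivalently $k$) under which this is $\exp(-\Omega(N^{1/4}))$, rather than asserting it unconditionally from the standing $k\le O(N^{1/5})$ hypothesis.
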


Finally, we show a concentration result analogous to Lemma~2.11 of~\cite{GRZ20} which shows that $F^{(k)}$ decides correctly on $\tDodd{k}$ and $\tDeven{k}$ with high probability.
This is then sufficient to deduce the applications described in~\cite{GRZ20}.

First, we prove a concentration bound for $\phi(\bD_1)$, i.e.\ $(\bx, \by)$ are generated by a single $N$-dimensional stopped Brownian motion with covariance $\Sigma$.
\begin{lemma}\label{lem:X1}
In the above context, the following holds:
\begin{equation}\label{eqn:forr-bound-walk}
    \P_{(\bx,\by)\sim \bD_1}[\phi(\bx,\by) \geq 3\eps/4] \geq 1-O(1/N^{6k^2}).
\end{equation}
\end{lemma}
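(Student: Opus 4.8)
The plan is to show that with high probability, two things happen simultaneously: the stopping time $\btau_1$ equals $\eps$ (i.e. the Brownian motion does not exit the cube $[-1/2,1/2]^N$ before time $\eps$), and, conditioned on this, the quantity $\phi(\bx,\by) = \frac{1}{n}\langle \bx, H_n \by\rangle$ is close to its typical value. Recall that $(\bx,\by)$ are the first $n$ and last $n$ coordinates of $\bX^{(1)}_{\btau_1}$, an $N$-dimensional Brownian motion with covariance $\Sigma = \begin{pmatrix} I_n & H_n \\ H_n & I_n\end{pmatrix}$ stopped at $\btau_1$. The motivation for the target $3\eps/4$: by Dynkin's formula (or directly, since $\phi$ is a quadratic with Hessian built from blocks of $H_n$), one computes $\E[\phi(\bX^{(1)}_t)] = \frac{t}{n}\langle H_n, H_n\rangle \cdot (\text{const}) = t \cdot \frac{1}{n}\operatorname{tr}(H_n^2)$; since $H_n$ is a normalized Hadamard matrix, $H_n^2 = I_n$, so $\operatorname{tr}(H_n^2) = n$ and the drift is exactly $t$. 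So at the full time $t = \eps$ one expects $\phi \approx \eps$, comfortably above $3\eps/4$.

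First I would bound $\P[\btau_1 < \eps]$, i.e. the probability that some coordinate of the Brownian motion exits $[-1/2,1/2]$ within time $\eps$. Each coordinate $\bX^{(1)}_t{}_{,j}$ is a one-dimensional Brownian motion with variance $\Sigma_{jj} = 1$ per unit time (the diagonal of $\Sigma$ is $1$), so by the reflection principle $\P[\sup_{t\le \eps}|\bX_{t,j}| \ge 1/2] \le 4\exp(-\frac{1}{8\eps})$. With $\eps = 1/(28 k^2 \ln N)$ this is $4\exp(-\frac{28 k^2 \ln N}{8}) = 4 N^{-3.5 k^2}$ per coordinate; a union bound over $N$ coordinates gives $\P[\btau_1 < \eps] \le 4N \cdot N^{-3.5 k^2} = O(N^{-3k^2})$ (using $k \ge 1$ so $3.5k^2 \ge k^2 + 1 + 6k^2$... more carefully, $3.5k^2 - 1 \ge 6k^2$ fails for $k=1$; so I should either tighten the reflection bound constant, or — cleaner — note that $\btau_1 = \eps$ is only one of the two good events and absorb the loss, or recheck that the stated $O(1/N^{6k^2})$ can be met; in fact we want the exponent to beat $6k^2$, so I'd want to use that exiting requires the motion to travel $1/2$, giving exponent $\frac{(1/2)^2}{2\eps} = \frac{1}{8\eps} = 3.5 k^2 \ln N$, and one really does need to be slightly more careful or adjust constants — this is exactly the kind of constant-chasing the final writeup must nail down).

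Second, conditioned on $\btau_1 = \eps$, I would prove a concentration bound for $\phi(\bX^{(1)}_\eps)$. Here I would use that $\phi(\bX_\eps) - \int_0^\eps (\text{drift}) \, dt$ is a martingale in $\eps$ — writing $M_t = \phi(\bX_t) - t$ via Dynkin/Itô — and control its fluctuations. The quadratic variation of $M_t$ is governed by $\langle \nabla\phi(\bX_t), \Sigma \nabla\phi(\bX_t)\rangle$, which on the cube $[-1/2,1/2]^N$ is bounded: $\nabla\phi$ has entries that are $O(1/\sqrt n)$-combinations of coordinates bounded by $1/2$, so the quadratic variation over $[0,\eps]$ is $O(\eps)$ with high probability, giving sub-Gaussian-type tails $\P[|M_\eps| > \eps/8] \le \exp(-\Omega(\eps))$ — but $\exp(-\Omega(\eps))$ is useless since $\eps \to 0$. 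So instead the right tool is a Gaussian-style bound exploiting that $\phi(\bX_\eps)$, absent truncation, is a polynomial of degree $2$ in a Gaussian vector, hence has good concentration via hypercontractivity / the Hanson–Wright inequality: $\P[|\phi(\tbX_\eps) - \eps| > \eps/8] \le \exp(-\Omega(\min(\frac{\eps^2 n^2}{\|\eps H_n\|_F^2}, \frac{\eps n}{\|\eps H_n\|_{op}}))) = \exp(-\Omega(\min(n, \eps n))) = \exp(-\Omega(\eps n))$, where $\tbX_\eps$ is the \emph{untruncated} Gaussian. Since $\eps n = \eps \cdot N/2 = \Theta(N/(k^2 \ln N))$, this is $\exp(-\Omega(N/(k^2\ln N)))$, which easily beats $N^{-6k^2}$ in the regime $k \le O(N^{1/5})$. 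Finally I would transfer this bound from the untruncated Gaussian $\tbX_\eps$ to the stopped motion $\bX^{(1)}_{\btau_1}$ by conditioning on $\{\btau_1 = \eps\}$: on that event $\bX^{(1)}_{\btau_1} = \bX^{(1)}_\eps = \tbX_\eps$ as random variables, and $\P[\btau_1 \ne \eps]$ was bounded in step one; combining the two failure probabilities via a union bound yields the claimed $1 - O(1/N^{6k^2})$.

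The main obstacle I anticipate is the last transfer step combined with getting the constants to line up: the reflection-principle exponent $\frac{1}{8\eps} = 3.5 k^2 \ln N$ gives only $N^{-3.5 k^2 + O(1)}$, not $N^{-6k^2}$, so either the exit-probability analysis must be sharpened (e.g. the motion must travel to $\pm 1/2$ and the relevant bound is genuinely $\sum_j \P[\sup|\bX_{t,j}|\ge 1/2]$ with the exact constant, and possibly $\eps$ in the eventual application is chosen with a different constant than written, or $N$ large enough absorbs the polynomial factors), or the statement's $6k^2$ should be read as "some polynomial rate $N^{-\Omega(k^2)}$ sufficient for the union bounds downstream." Either way, the conceptual content — exit probability via reflection principle, plus Gaussian concentration of a degree-$2$ polynomial via Hanson–Wright, glued on the event $\btau_1 = \eps$ — is routine; the delicacy is purely in the bookkeeping of constants against the prescribed $\eps = 1/(28k^2\ln N)$.
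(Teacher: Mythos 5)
Your proposal follows essentially the same route as the paper: (i) bound $\P[\btau_1 < \eps]$ by a union bound over the $N$ coordinates plus a maximal inequality for one-dimensional Brownian motion, (ii) prove concentration of $\phi$ at the deterministic time $\eps$ for the untruncated Gaussian, and (iii) glue the two estimates on the event $\{\btau_1 = \eps\}$. The only substantive difference is in step (ii): the paper exploits that $\Sigma$ is degenerate of rank $n$, so $\by = H_n\bx$ almost surely and $\phi(\bx,\by) = \frac1n\|\bx\|_2^2$ is an average of $n$ i.i.d.\ squared Gaussians of variance $\eps$; a standard $\chi^2$ tail bound then gives $\exp(-\Omega(N))$. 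Your Hanson--Wright argument on the general quadratic form reaches the same place (note your exponent should come out as $\Omega(n)$ rather than $\Omega(\eps n)$, since both the Frobenius and operator terms scale the same way in $\eps$; either suffices against $N^{-6k^2}$ in the stated regime of $k$).

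Your worry about the exit-probability constant is legitimate, and you should not paper over it. Over the horizon $[0,\eps]$, Doob/reflection gives $\P[\sup_{0\le t\le\eps}|\bB_t|\ge 1/2]\le 2e^{-1/(8\eps)} = 2N^{-3.5k^2}$, hence $\P[\btau_1<\eps]\le O(N^{1-3.5k^2})$, which indeed does not beat $N^{-6k^2}$ for any $k$. The paper's own proof reaches exponent $7k^2$ only by bounding the probability of exit \emph{before time $\eps/2$} rather than before time $\eps$, a step that is not justified as written (the stopping time truncates at $\eps$). So the literal constant $6$ in the exponent requires either adjusting the constant $28$ in the definition of $\eps$ or reading the rate as $N^{-\Omega(k^2)}$, which is all the downstream union bounds in \Cref{prop:concentration} actually need. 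Apart from this shared bookkeeping issue, your proposal is complete and matches the paper's argument.
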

\begin{proof}
Notice that an alternate way to sample $(\bx, \by) \sim \bD_1$ is to let $\bX_t$ be a $n$-dimensional Brownian motion with covariance $I_n$ stopped at the stopping time
\[
    \btau \coloneqq \min\{\eps,\, \text{first time that $\bX_t$ or $H_n \bX_t$ exits $[-1/2,1/2]^n$}\},
\]
and let $(\bx,\by) = (\bX_\btau, H_n \bX_\btau)$. Then, $\phi(\bx,\by) = \frac1n\norm{\bX_\btau}_2^2$.
In order to prove the desired bound, we first prove that with high probability $\btau = \eps$, i.e.\ the path of the Brownian motion did not exit $[-1/2, 1/2]^N$ before time $\eps$.
We then show that $\frac1n \norm{\bX_\eps}_2^2 \geq 3\eps/4$ with high probability, and conclude using a union bound.
We can union bound over the $N$ coordinates,
\[
    \Pr[\btau < \eps]
    \leq N \cdot \P[\text{1st coordinate of $X_t$ exits $[-1/2,1/2]$ earlier than $\eps/2$}].
\]
Since each coordinate of $\bX_t$ is a standard 1D Brownian motion $\bB_t$,
we can apply Doob's submartingale inequality (e.g.~\cite[Proposition II.1.8]{RY99}) to obtain
\[
    \Pr\bracks*{\sup_{0 \leq t \leq \ep/2} |\bB_t| \geq \frac12} \leq 2e^{-1/4\ep} = 2e^{-7k^2\ln N} = \frac{2}{N^{7k^2}}.
\]
Therefore,
\begin{equation}\label{eqn:bm-stops-early}
\Pr[\btau < \eps] \leq 2/N^{7k^2-1}.
\end{equation}

Next, we consider $\frac1n \norm{\bX_\eps}_2^2$. Note that this is simply the average of the squares of $n$ iid Gaussians $\bx_1, \dots, \bx_n$ with mean $0$ and variance $\eps$. Using \cite[Example 2.11]{Wai19}, we have the tail bound
\begin{equation}\label{eqn:chi-squared}
    \Pr\bracks*{\abs{\frac{1}{n} \sum_{i=1}^n \bx_i^2 - \eps} \geq \frac\eps4} \leq \exp(-\Omega(N)).
\end{equation}

Taking a union bound over \Cref{eqn:bm-stops-early,eqn:chi-squared}, we have $\P_{(\bx,\by)\sim\bD_1}[\phi(\bx,\by) \leq 3\eps/4] \leq O(1/N^{6k^2})$.
\end{proof}

\begin{proposition}\label{prop:concentration}
    The following hold:
    \[
        \P_{\td\bz \sim \tDeven{k}}[F^{(k)}(\td\bz) = 1] \geq 1-O\parens*{\frac{k}{N^{6k^2}}} \qquad\text{and}\qquad
        \P_{\td\bz \sim \tDodd{k}}[F^{(k)}(\td\bz) = -1] \geq 1-O\parens*{\frac{k}{N^{6k^2}}}.
    \]
\end{proposition}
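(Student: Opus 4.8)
The plan is to show that a rounded draw $\td\bz$ is, with high probability, ``decoded correctly'' block by block: if $\bS\subseteq[k]$ is the (even or odd) set used to generate $\td\bz$, then with probability $1-O(k/N^{6k^2})$ we have $F(\td\bz_i)=-1$ for every $i\in\bS$ and $F(\td\bz_i)=+1$ for every $i\notin\bS$ simultaneously. Since $F^{(k)}(\td\bz)=\prod_{i=1}^{k}F(\td\bz_i)$, on that event $F^{(k)}(\td\bz)=(-1)^{|\bS|}$, which equals $+1$ when $|\bS|$ is even (the case $\td\bz\sim\tDeven{k}$) and $-1$ when $|\bS|$ is odd (the case $\td\bz\sim\tDodd{k}$); both halves of the proposition follow at once.

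To carry this out, recall that a draw from $\tDeven{k}$ (resp.\ $\tDodd{k}$) is generated by first sampling $\bS$ uniformly among even (resp.\ odd) subsets of $[k]$, then independently over $i\in[k]$ letting the $i$-th block be $\bz_i\sim\bD_1$ rounded to $\td\bz_i\sim\bz_i$ when $i\in\bS$, and $\bz_i=0$ rounded to $\td\bz_i\sim 0$ (a uniform point of $\{-1,1\}^N$) when $i\notin\bS$. Thus it suffices to bound, for each fixed $\bS$ and each $i$, the probability that $F(\td\bz_i)$ fails to take the value $-1$ (if $i\in\bS$) or $+1$ (if $i\notin\bS$), and then union bound over $i\in[k]$. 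For $i\in\bS$: \Cref{lem:X1} gives $\phi(\bz_i)\ge 3\eps/4$ except with probability $O(1/N^{6k^2})$, and, conditioning on the value $\bz_i\in[-1/2,1/2]^{N}$, \Cref{prop:rounding} gives $|\phi(\td\bz_i)-\phi(\bz_i)|\le\eps/4$ except with probability $\exp(-\Omega(N^{1/4}))$; on the intersection $\phi(\td\bz_i)\ge\eps/2$, so $F(\td\bz_i)=-1$. For $i\notin\bS$: here $\phi(\bz_i)=\phi(0)=0$, so \Cref{prop:rounding} gives $|\phi(\td\bz_i)|\le\eps/4$ except with probability $\exp(-\Omega(N^{1/4}))$, whence $F(\td\bz_i)=+1$. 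Each block therefore fails with probability $O(1/N^{6k^2})$; the union bound over the $k$ blocks holds for every fixed $\bS$ and hence after averaging over $\bS$, yielding the claimed $1-O(k/N^{6k^2})$.

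Essentially all of the work is already packaged in \Cref{lem:X1} and \Cref{prop:rounding}; what remains is bookkeeping. The points to be careful about are: (i) the conditioning structure --- \Cref{prop:rounding} is applied conditionally on each $\bz_i$, which is legitimate because in the rounding step the coordinatewise rounding is independent across coordinates, across blocks, and independent of $\bS$; and (ii) the fact that $F$, hence $F^{(k)}$, is only a \emph{partial} function, so one must verify that on the good event every $\phi(\td\bz_i)$ lies strictly in one of the two regions on which $F$ is defined ($\phi\le\eps/4$ or $\phi\ge\eps/2$) --- this is exactly what the estimates above guarantee, so $F^{(k)}(\td\bz)$ is well defined and equals $(-1)^{|\bS|}$ there. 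I expect (ii) to be the only genuine conceptual subtlety, together with the mild quantitative point that in the comparison $\exp(-\Omega(N^{1/4}))=O(1/N^{6k^2})$ one is implicitly using that $k$ is not too large (which holds comfortably in the polylogarithmic-in-$N$ regime relevant to the applications).
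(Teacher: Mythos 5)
Your proof is correct and follows essentially the same block-by-block decomposition as the paper: fix the set $\bS$, bound the per-block failure probability separately for $i\in\bS$ (via \Cref{lem:X1} plus \Cref{prop:rounding}) and $i\notin\bS$, then union bound over the $k$ blocks and observe $(-1)^{|\bS|}$ has the right sign. The one small deviation is in the $i\notin\bS$ case: the paper proves a separate Hoeffding bound $\P_{(\bx,\by)\sim\bU_N}[\phi(\bx,\by)\le\eps/4]\ge 1-\exp(-\Omega(N\eps^2))$ for the uniform distribution, whereas you reuse \Cref{prop:rounding} with $z=0$ (for which $\phi(0)=0$), getting $\exp(-\Omega(N^{1/4}))$ instead. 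Both tails are dominated by the $O(1/N^{6k^2})$ term from the $i\in\bS$ case in the relevant parameter regime, so the conclusion is unaffected; your variant is marginally more economical since it avoids introducing an extra estimate.
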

\begin{proof}
We first show $F$ decides correctly on the coordinates with $\bU_N$ with high probability:
\begin{equation}\label{eqn:forr-bound-unif}
    \P_{(\bx,\by)\sim \bU_N}[\phi(\bx,\by) \leq \eps/4] \geq 1-\exp(-\Omega(N\eps^2)).
\end{equation}
To see this, note that $\bx$ and $\by$ are independent, so $\bx$ and $H_n\by$ are independent. Hence $\phi(\bx,\by)$ is simply the average of random signs, and the bound holds by Hoeffding's inequality.

Finally, to prove the proposition it suffices to prove that for any fixed $S \subseteq [k]$, $\P_{\bz \sim \bD_S,\, \td\bz \sim \bz}[F^{(k)}(\td\bz) \neq (-1)^{|S|}] \leq O(k/N^{6k^2})$. If $i \in S$, then $\td\bz_i$ is distributed as $\td\bD_1$, so \Cref{lem:X1} combined with \Cref{prop:rounding} implies $\P[F(\td\bz_i) = 1] \leq O(1/N^{6k^2})$. Meanwhile if $i \notin S$, then $\td\bz_i$ is distributed as $\bU_N$, so \Cref{eqn:forr-bound-unif} implies $\P[F(\td\bz_i) = -1] \leq \exp(-\Omega(N\eps^2))$. With $k$ and therefore $\eps$ taken sufficiently small, a union bound over the $k$ coordinates completes the proof.
\end{proof}

\section*{Acknowledgements}
I would like to thank Ryan O'Donnell for many helpful comments on this paper.

\end{document}